\newcommand{\R}{{\mathbb  R}}
\numberwithin{equation}{section}
\newtheorem{thm}{\bf Theorem}[section]
\newtheorem{lem}[thm]{\bf Lemma}
\newtheorem{prop}[thm]{\bf Proposition}
\newtheorem{cor}[thm]{\bf Corollary}
\newtheorem{defn}{\bf Definition}[section]
\theoremstyle{remark}
\newtheorem{rem}{\bf Remark}[section]
\begin{document}

\title{\large\bf DYNAMICS UNDER GEOMETRIC  DISSIPATION}
\author{Petre Birtea and Dan Com\u{a}nescu}
\date{ }
\maketitle

\begin{abstract}
We give sufficient conditions for asymptotic stabilization of equilibrium points and periodic orbits of a dynamical system when we add a geometric dissipation of gradient type. We also describe the domain of attraction in the case of asymptotic stability.

\end{abstract}

{\bf MSC}: 37C10, 37C75.

{\bf Keywords}: dynamical systems, stability theory.

\section{Introduction}

We consider the dynamical system
\begin{equation}\label{unperturbed}
    \dot{x}=X(x),
\end{equation}
where $X\in \mathcal{X}(M)$ with $(M,g)$ a smooth finite dimensional Riemannian manifold. We will denote by $x_{un}(\cdot, x_0)$ the solution of \eqref{unperturbed} with the initial condition $x_0$.
Suppose that we have $F_1,...F_k,G\in C^{\infty}(M)$  conserved quantities for dynamics \eqref{unperturbed}. In \cite{birtea-comanescu-geometric} have been constructed a perturbation vector field that conserves $F_1,...,F_k$ and dissipates $G$ after a prescribed rule given by $h\in C^{\infty}(M)$. If we take $h(x)=\det\Sigma_{(F_1,...,F_k,G)}^{(F_1,...,F_k,G)}(x)$, then the perturbation is given by the standard control vector field
\begin{equation}\label{v0}
    \mathbf{v_0}=\sum_{i=1}^k(-1)^{i+k+1}\det \Sigma_{(F_1,...,\widehat{F_i},...,F_k,G)}^{(F_1,...,F_k)}\nabla
    F_i+\det\Sigma_{(F_1,...,F_k)}^{(F_1,...,F_k)}\nabla G.
\end{equation}
For $f_1,...,f_r,g_1,...,g_s:M\rightarrow \mathbb{R}$ smooth functions on the manifold $(M,g)$ and $<\cdot,\cdot>$ the scalar product induced by Riemannian metric $g$, we use the notation
\begin{equation}\label{sigma}
\Sigma_{(g_1,...,g_s)}^{(f_1,...,f_r)}=\left(%
\begin{array}{cccc}
  <\nabla g_1,\nabla f_{1}> & ... & <\nabla g_s,\nabla f_{1}> \\
  ... & ... & ... \\

  <\nabla g_1,\nabla f_r> & ... & <\nabla g_s,\nabla f_r> \\
\end{array}%
\right).
\end{equation}
In \cite{birtea-comanescu-geometric} has been given three different formulations for the standard control vector field $\mathbf{v_0}$:
\begin{itemize}
\item [(i)] the covariant formulation,
$\mathbf{v}_0=(-1)^{n+1}\sharp_g(*(dF_1\wedge ... \wedge dF_k\wedge *(dG\wedge dF_1\wedge ... \wedge dF_k)));$
\item [(ii)] the contravariant formulation,
$\mathbf{v}_0=\mathbf{i}_{dG}\mathbf{T}$, where $\mathbf{T}:\Omega^1(M)\times\Omega^1(M)\rightarrow \mathbb{R}$ is the symmetric contravariant 2-tensor given by
\begin{equation}\label{T}
\mathbf{T}:=\sum_{i,j=1}^k(-1)^{i+j+1}\det\Sigma_{(F_1,...,\hat{F_i},...,F_k)}^{(F_1,...,\hat{F_j},...,F_k)}\nabla F_i\otimes\nabla F_j+\det\Sigma_{(F_1,...,F_k)}^{(F_1,...,F_k)} g^{-1};
\end{equation}
\item [(iii)] the formulation with orthogonal projection,
$\mathbf{v}_0(x)=\det\Sigma_{(F_1,...,F_k)}^{(F_1,...,F_k)}(x) P_{T_xL_c} (\nabla G(x)),$ where $L_c$ is the regular leaf of $\mathbf{F}=(F_1,...,F_k):M\rightarrow \mathbb{R}^k$ which contains $x$ and $P_{T_xL_c}:T_xM\rightarrow T_xM$ is the orthogonal projection.
\end{itemize}

The aim of this paper is to study the dynamics of {\bf the geometrically dissipated system}
\begin{equation}\label{perturbed}
    \dot{x}=X(x)-\mathbf{v}_0(x).
\end{equation}
We will denote by $x_{p}(\cdot, x_0)$ the solution of \eqref{perturbed} with the initial condition $x_0$. By construction of the standard control vector field ${\bf v}_0$ we have that the function $G$ decreases along the solutions of the geometrically dissipated system  \eqref{perturbed}, i.e. 
\begin{equation}\label{G-crescator}
\frac{dG}{dt}(x_p(t,x))=-\det\Sigma_{(F_1,...,F_k,G)}^{(F_1,...,F_k,G)}(x_p(t,x))\leq 0.
\end{equation}

The standard control vector field can be formally written as
\begin{equation*}
\mathbf{v}_0(x)=\det \left(%
\begin{array}{cccc}
  <\nabla F_1(x),\nabla F_{1}(x)> & ... & <\nabla F_k(x),\nabla F_{1}(x)> & <\nabla G(x),\nabla F_1(x)> \\
  ... & ... & ... & ... \\
  <\nabla F_1(x),\nabla F_{k}(x)> & ... & <\nabla F_k(x),\nabla F_{k}(x)> & <\nabla G(x),\nabla F_k(x)>\\
\nabla F_1(x) & ... & \nabla F_k(x) & \nabla G(x) \\
\end{array}%
\right)
\end{equation*}
and consequently, we have the following result.

\begin{lem}\label{eq-v0}
For $x\in M$ the following are equivalent:
\begin{itemize}
\item [(i)] $\mathbf{v}_0(x)=0$; 

\item [(ii)] $\nabla F_1(x),...,\nabla F_k(x),\nabla G(x)$ are linear dependent;

\item [(iii)] $\det\Sigma_{(F_1,...,F_k,G)}^{(F_1,...,F_k,G)}(x)=0$.
\end{itemize}
\end{lem}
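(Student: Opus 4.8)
The plan is to obtain all three equivalences from two ingredients: the classical fact that the Gram matrix $\big(\langle w_a,w_b\rangle\big)_{a,b}$ of vectors $w_1,\dots,w_m$ in an inner product space is singular if and only if $w_1,\dots,w_m$ are linearly dependent, and the elementary remark that pairing $\mathbf v_0(x)$ with a tangent vector amounts to a substitution in the formal determinant displayed just before the lemma. Since $\Sigma_{(F_1,\dots,F_k,G)}^{(F_1,\dots,F_k,G)}(x)$ is exactly the Gram matrix of $\nabla F_1(x),\dots,\nabla F_k(x),\nabla G(x)$, the Gram fact gives $(ii)\Leftrightarrow(iii)$ immediately; so it suffices to establish $(i)\Rightarrow(iii)$ and $(ii)\Rightarrow(i)$ to close the cycle $(i)\Rightarrow(iii)\Rightarrow(ii)\Rightarrow(i)$.

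For the key remark, fix $x\in M$ and $w\in T_xM$. By \eqref{v0} the vector $\mathbf v_0(x)$ is a linear combination $\sum_{i=1}^k c_i\,\nabla F_i(x)+c_{k+1}\,\nabla G(x)$ whose scalar coefficients $c_i$ are precisely the cofactors of the entries of the last (vector‑valued) row of the formal $(k+1)\times(k+1)$ determinant; hence
\[
\langle\mathbf v_0(x),w\rangle=\sum_{i=1}^k c_i\,\langle\nabla F_i(x),w\rangle+c_{k+1}\,\langle\nabla G(x),w\rangle
\]
is the Laplace expansion along the last row of the genuine scalar matrix $A(w)$ obtained from that formal determinant by replacing its last row by $\big(\langle\nabla F_1(x),w\rangle,\dots,\langle\nabla F_k(x),w\rangle,\langle\nabla G(x),w\rangle\big)$. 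Taking $w=\nabla G(x)$ turns $A(w)$ into the Gram matrix of $\nabla F_1(x),\dots,\nabla F_k(x),\nabla G(x)$, so $\langle\mathbf v_0(x),\nabla G(x)\rangle=\det\Sigma_{(F_1,\dots,F_k,G)}^{(F_1,\dots,F_k,G)}(x)$, which proves $(i)\Rightarrow(iii)$.

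It remains to prove $(ii)\Rightarrow(i)$. Assume $\sum_{i=1}^k\lambda_i\nabla F_i(x)+\mu\,\nabla G(x)=0$ with $(\lambda_1,\dots,\lambda_k,\mu)\neq 0$. For an arbitrary $w\in T_xM$, form the corresponding linear combination of the columns of $A(w)$: its entry in the $m$‑th row ($m\le k$) is $\langle\sum_i\lambda_i\nabla F_i(x)+\mu\nabla G(x),\nabla F_m(x)\rangle=0$, and its entry in the last row is $\langle\sum_i\lambda_i\nabla F_i(x)+\mu\nabla G(x),w\rangle=0$. Hence the columns of $A(w)$ are linearly dependent, so $\langle\mathbf v_0(x),w\rangle=\det A(w)=0$; as $w$ is arbitrary and $g$ is nondegenerate, $\mathbf v_0(x)=0$. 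The only delicate point is the bookkeeping in the key remark — correctly matching which gradient indexes each row and each column of $A(w)$ — after which everything reduces to Gram‑determinant algebra; note in particular that no regularity of the leaves of $\mathbf F$ is used, since we argue from the explicit formula \eqref{v0} rather than from the projection formulation of $\mathbf v_0$ recalled in the introduction.
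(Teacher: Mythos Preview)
Your proof is correct and follows essentially the same approach as the paper: the equivalence $(ii)\Leftrightarrow(iii)$ is the Gram--determinant fact, and $(ii)\Rightarrow(i)$ is exactly the paper's observation that a linear dependence among the gradients forces a column dependence in the formal determinant. The only difference is that where the paper asserts $(i)\Rightarrow(ii)$ is ``obvious from the definition of $\mathbf v_0$'' (since $\mathbf v_0$ is itself a linear combination of the gradients, with at least one nonzero coefficient unless $\nabla F_1,\dots,\nabla F_k$ are already dependent), you instead close the cycle via $(i)\Rightarrow(iii)$ by the explicit pairing $\langle\mathbf v_0(x),\nabla G(x)\rangle=\det\Sigma_{(F_1,\dots,F_k,G)}^{(F_1,\dots,F_k,G)}(x)$ --- a slightly cleaner bookkeeping of the same idea.
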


\begin{proof}
For the implication $(ii)\Rightarrow (i)$ we have that in the formal determinant that defines ${\bf v}_0$ one of the column is a linear combination of the remaning columns. 
The implication  $(i)\Rightarrow (ii)$ is obvious from the definition of ${\bf v}_0$. 
The equivalence between  $(ii)$ and  $(iii)$ is a well known result in linear algebra.

\end{proof}

\section{Equilibrium points}

In this section we study the equilibrium points for the geometrically dissipated system \eqref{perturbed}.

\begin{prop}\label{eq-equilibriums}
We have $X(x)-\mathbf{v}_0(x)=0$ if and only if $X(x)=0$ {\bf and} $\mathbf{v}_0(x)=0$.
\end{prop}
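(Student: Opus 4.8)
The ``if'' direction is immediate: if $X(x)=0$ and $\mathbf{v}_0(x)=0$, then of course $X(x)-\mathbf{v}_0(x)=0$. So the content is the ``only if'' direction, and the plan is to exploit the fact that $F_1,\dots,F_k,G$ are conserved quantities for the unperturbed system \eqref{unperturbed} together with the explicit form \eqref{v0} of $\mathbf{v}_0$.

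The key observation is that $\mathbf{v}_0(x)$ lies, at each point $x$, in the linear span of $\nabla F_1(x),\dots,\nabla F_k(x),\nabla G(x)$, since by \eqref{v0} it is a $C^\infty$-combination of exactly these gradient vector fields. On the other hand, because each $F_i$ and $G$ is conserved along \eqref{unperturbed}, we have $\langle \nabla F_i(x),X(x)\rangle=0$ for all $i$ and $\langle\nabla G(x),X(x)\rangle=0$. Taking the scalar product of $X(x)$ with the expression \eqref{v0} for $\mathbf{v}_0(x)$ and distributing over the sum then yields $\langle X(x),\mathbf{v}_0(x)\rangle=0$; that is, $X(x)$ is $g$-orthogonal to $\mathbf{v}_0(x)$.

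Now suppose $X(x)-\mathbf{v}_0(x)=0$, i.e. $X(x)=\mathbf{v}_0(x)$. Substituting into the orthogonality relation just established gives $\langle X(x),X(x)\rangle=\langle X(x),\mathbf{v}_0(x)\rangle=0$, so by positive-definiteness of the Riemannian metric $g$ we get $X(x)=0$, and hence also $\mathbf{v}_0(x)=X(x)=0$. This completes the argument. I do not expect any genuine obstacle here: the only point that requires a little care is invoking the conservation hypothesis in the precise form $\langle\nabla F_i,X\rangle=\langle\nabla G,X\rangle=0$ and noting that $\mathbf{v}_0$ is a pointwise linear combination of these gradients, after which the conclusion follows from non-degeneracy of $g$. (One could equivalently phrase the same argument using the contravariant formulation $\mathbf{v}_0=\mathbf{i}_{dG}\mathbf{T}$ or the orthogonal-projection formulation, but the direct computation from \eqref{v0} is the most transparent.)
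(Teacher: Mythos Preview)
Your argument is correct, but it follows a different route from the paper's. The paper argues via the dissipation identity \eqref{G-crescator}: if $x$ is an equilibrium of $X-\mathbf{v}_0$, then $\dot G$ vanishes at $x$, hence $\det\Sigma_{(F_1,\dots,F_k,G)}^{(F_1,\dots,F_k,G)}(x)=0$, and Lemma~\ref{eq-v0} then gives $\mathbf{v}_0(x)=0$. Your proof bypasses both \eqref{G-crescator} and Lemma~\ref{eq-v0} entirely: you observe directly that $\mathbf{v}_0(x)$ lies in $\operatorname{span}\{\nabla F_1(x),\dots,\nabla F_k(x),\nabla G(x)\}$ while $X(x)$ is $g$-orthogonal to that span by the conservation hypotheses, so $\langle X(x),\mathbf{v}_0(x)\rangle=0$; if $X(x)=\mathbf{v}_0(x)$ this forces $\|X(x)\|^2=0$. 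This is a cleaner and more elementary proof, and it makes transparent the underlying geometric reason for the result (the two vector fields live in $g$-orthogonal subspaces). The paper's approach, on the other hand, integrates the proposition into the running theme that the Gram determinant $\det\Sigma$ governs the dissipative behaviour, which is useful for the subsequent arguments even if slightly less direct here.
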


\begin{proof}
The implication $"\Leftarrow"$ is trivial. 

For the other implication, from \eqref{G-crescator} we have that $\frac{dG}{dt}(x_p(t,x))=-\det\Sigma_{(F_1,...,F_k,G)}^{(F_1,...,F_k,G)}(x_p(t,x))$. If $x$ is an equilibrium point of the vector field $X-\mathbf{v}_0$ then $\det\Sigma_{(F_1,...,F_k,G)}^{(F_1,...,F_k,G)}(x)=0$ and consequently, $\mathbf{v}_0(x)=0$ and also $X(x)=0$.
\end{proof}

We denote by $E_{un}$ and $E_p$ the sets of equilibrium points for the unperturbed system \eqref{unperturbed}, respectively the geometrically dissipated system \eqref{perturbed}.
A relevant set for the perturbed dynamics is given by
\begin{equation}\label{inv}
    Inv:=\{x\in M\,|\,\det\Sigma_{(F_1,...,F_k,G)}^{(F_1,...,F_k,G)}(x)=0\}=\{x\in M\,|\,{\bf v}_0(x)=0\}.
\end{equation}

Using Lemma \ref{eq-v0} and Proposition \ref{eq-equilibriums} we obtain the following characterization of equilibrium points for the perturbed system.
\begin{thm}\label{equilibria}
 The set of equilibria for the geometrically dissipated system is characterized by the equality $$E_p= E_{un}\cap Inv.$$

\end{thm}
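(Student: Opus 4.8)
The plan is to show the two set inclusions $E_p \subseteq E_{un} \cap Inv$ and $E_{un} \cap Inv \subseteq E_p$, using the two results already established.

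For the inclusion $E_p \subseteq E_{un} \cap Inv$, I would take $x \in E_p$, i.e. $X(x) - \mathbf{v}_0(x) = 0$. By Proposition \ref{eq-equilibriums} this forces $X(x) = 0$ and $\mathbf{v}_0(x) = 0$ simultaneously. The first equality says $x \in E_{un}$. The second equality, together with the equivalence $(i) \Leftrightarrow (iii)$ of Lemma \ref{eq-v0}, gives $\det\Sigma_{(F_1,\dots,F_k,G)}^{(F_1,\dots,F_k,G)}(x) = 0$, which is exactly the defining condition \eqref{inv} for membership in $Inv$. Hence $x \in E_{un} \cap Inv$.

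For the reverse inclusion $E_{un} \cap Inv \subseteq E_p$, I would take $x \in E_{un} \cap Inv$. Then $X(x) = 0$ since $x \in E_{un}$, and $\mathbf{v}_0(x) = 0$ since $x \in Inv$ (again by \eqref{inv}, or by Lemma \ref{eq-v0}). Subtracting, $X(x) - \mathbf{v}_0(x) = 0$, so $x \in E_p$. This direction is in fact the trivial implication already noted in Proposition \ref{eq-equilibriums}.

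I do not anticipate any real obstacle here: the theorem is essentially a repackaging of Proposition \ref{eq-equilibriums} (which handles the arithmetic of the vanishing of $X - \mathbf{v}_0$) and Lemma \ref{eq-v0} (which identifies $\{\mathbf{v}_0 = 0\}$ with $Inv$). The only point that requires a moment's care is making sure the set $Inv$ as defined in \eqref{inv} is literally the zero set of $\mathbf{v}_0$ — but this is recorded in \eqref{inv} itself, so there is nothing to prove. The argument is therefore just a two-line chaining of the cited results.
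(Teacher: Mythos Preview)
Your proposal is correct and is exactly the approach the paper takes: the theorem is stated as an immediate consequence of Lemma~\ref{eq-v0} and Proposition~\ref{eq-equilibriums}, with no further argument given. Your two-inclusion write-up simply spells out what the paper leaves implicit.
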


By perturbing the initial dynamics  \eqref{unperturbed}  with the standard control vector field $\mathbf{v}_0$, some of unperturbed equilibrium points will not remain equilibrium points for the geometrically dissipated system. We loose exactly that equilibrium points $x_e$ for which the vectors $\nabla F_1(x_e)$,...,$\nabla F_k(x_e)$ and $\nabla G(x_e)$ are linear independent.
The set $Inv$ is invariant under the unperturbed dynamics  \eqref{unperturbed} (see Corollary 2.4 in \cite{birtea-comanescu}). 
Next we will prove that the set $Inv$ is also an invariant set for the geometrically dissipated dynamics \eqref{perturbed}. Moreover, a solution of the unperturbed system which start from $Inv$ is also a solution for the geometrically dissipated system.

\begin{thm}\label{invarianta}
We have the following properties:
\begin{itemize}
\item [(i)] For an initial condition $x\in Inv$ we have $x_{un}(t,x)=x_p(t,x)$.
\item [(ii)] The set $Inv$ is invariant under the geometrically dissipated system \eqref{perturbed}.
\item [(iii)] If $x\notin Inv$, then $\frac{dG}{dt}(x_p(t,x))<0$ for all $t$.

\end{itemize}
\end{thm}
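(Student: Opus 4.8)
The plan is to deduce all three parts from three facts recalled in the excerpt: the identification $Inv=\{x\in M\,|\,\mathbf{v}_0(x)=0\}$ from \eqref{inv} (equivalently Lemma \ref{eq-v0}), the dissipation identity \eqref{G-crescator}, and the invariance of $Inv$ under the unperturbed dynamics \eqref{unperturbed} (Corollary 2.4 of \cite{birtea-comanescu}). No genuinely new estimate is needed; the work is organizational, the only delicate point being book-keeping with maximal intervals of existence.

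For (i), fix $x\in Inv$ and look at the unperturbed solution $t\mapsto x_{un}(t,x)$. Since $Inv$ is invariant under \eqref{unperturbed}, we have $x_{un}(t,x)\in Inv$, hence $\mathbf{v}_0(x_{un}(t,x))=0$, for every $t$ in the maximal interval of existence. Therefore
\begin{equation*}
\frac{d}{dt}x_{un}(t,x)=X(x_{un}(t,x))=X(x_{un}(t,x))-\mathbf{v}_0(x_{un}(t,x)),
\end{equation*}
so $t\mapsto x_{un}(t,x)$ is also an integral curve of the geometrically dissipated vector field $X-\mathbf{v}_0$ with the same initial condition $x$. By uniqueness of solutions of ordinary differential equations it coincides with $x_p(\cdot,x)$, which is exactly (i).

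Part (ii) is then immediate: for $x\in Inv$ we get $x_p(t,x)=x_{un}(t,x)\in Inv$ for all $t$ by (i) and the invariance of $Inv$ under \eqref{unperturbed}. For (iii), I claim that a trajectory of \eqref{perturbed} issued from outside $Inv$ never meets $Inv$. Indeed, suppose $x\notin Inv$ but $x_p(t_0,x)\in Inv$ for some $t_0$; the point $x_p(t_0,x)$ lies on the perturbed orbit through $x$, and, applying (ii) to the initial condition $x_p(t_0,x)$ together with uniqueness of solutions, this whole orbit---in particular $x=x_p(-t_0,x_p(t_0,x))$---is contained in $Inv$, contradicting $x\notin Inv$. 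Hence $x_p(t,x)\notin Inv$, i.e.\ $\det\Sigma_{(F_1,...,F_k,G)}^{(F_1,...,F_k,G)}(x_p(t,x))\neq 0$, for all $t$.

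Finally I combine this with \eqref{G-crescator}: the identity $\frac{dG}{dt}(x_p(t,x))=-\det\Sigma_{(F_1,...,F_k,G)}^{(F_1,...,F_k,G)}(x_p(t,x))\le 0$ already forces the determinant to be $\ge 0$ everywhere, and it has just been shown to be nonzero along the trajectory, hence strictly positive there; thus $\frac{dG}{dt}(x_p(t,x))<0$ for all $t$, which is (iii). The step I expect to require the most care is making the orbit argument in (iii) rigorous on a possibly incomplete manifold, i.e.\ ensuring that $x_p(-t_0,x_p(t_0,x))=x$ and that all invariance statements are read inside the common maximal interval of existence; everything else is a direct application of uniqueness for ODEs.
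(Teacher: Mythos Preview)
Your proof is correct and follows essentially the same route as the paper: part (i) uses the invariance of $Inv$ under \eqref{unperturbed} together with Lemma \ref{eq-v0} and uniqueness, part (ii) is an immediate corollary, and part (iii) combines the invariance of $Inv$ under \eqref{perturbed} with the dissipation identity \eqref{G-crescator}. The only difference is cosmetic: you spell out the time-shift/uniqueness argument for (iii) explicitly, whereas the paper just invokes (ii) to conclude that $x\notin Inv$ forces $x_p(t,x)\notin Inv$ for all $t$.
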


\begin{proof}
{\it (i)} Let $x\in Inv$ and $x_{un}(t,x)$ be the solution of \eqref{unperturbed} starting from initial condition $x$. From the invariance of $Inv$ under the unperturbed dynamics, we obtain that $x_{un}(t,x)\in Inv$ for all $t$. By Lemma \ref{eq-v0}, we have that $\mathbf{v}_0(x_{un}(t,x))=0$ for all $t$. This shows that $x_{un}(t,x)$ is also a solution of the geometrically dissipated system \eqref{perturbed}.

\noindent {\it (ii)} It is an immediate consequence of {\it (i)}.

\noindent {\it (iii)} From \eqref{G-crescator} we have that $\frac{dG}{dt}(x_p(t,x))=-\det\Sigma_{(F_1,...,F_k,G)}^{(F_1,...,F_k,G)}(x_p(t,x))\leq 0$. If $x\notin Inv$, then from {\it (ii)} we have that $x_p(t,x)\notin Inv$, for all $t$. By the definition of $Inv$, we obtain the strict inequality.
\end{proof}

An immediate consequence of Theorem \ref{invarianta} {\it (i)} and {\it (iii)} is the following result.

\begin{cor}
If the geometrically dissipated system \eqref{perturbed} has a periodic orbit or a homoclinic orbit or a heteroclinic cycle, they are contained in $Inv$ and they are also a periodic orbit, respectively homoclinic orbit or heteroclinic cycle for the unperturbed system \eqref{unperturbed}.
\end{cor}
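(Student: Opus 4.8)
The plan is to exploit the monotonicity of $G$ along solutions of \eqref{perturbed} recorded in \eqref{G-crescator}, together with the dichotomy in Theorem \ref{invarianta}. The key observation is that each of the three objects in the statement — a periodic orbit, a homoclinic orbit, a heteroclinic cycle — forces the continuous function $t\mapsto G(x_p(t,x))$ either to return to a value it has already attained or to be squeezed between such values; since by \eqref{G-crescator} this function is non-increasing, in each case it must in fact be constant along the orbit, and then Theorem \ref{invarianta} finishes the argument.

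First I would treat the periodic case. If $\gamma$ is a periodic orbit of \eqref{perturbed} of period $T$ through a point $x$, then $G(x_p(T,x))=G(x_p(0,x))=G(x)$, while by \eqref{G-crescator} the map $t\mapsto G(x_p(t,x))$ is non-increasing on $[0,T]$; hence it is constant, so $\frac{dG}{dt}(x_p(t,x))=0$ for all $t$. By Theorem \ref{invarianta} \emph{(iii)} (in its contrapositive form) this forces $x\in Inv$, hence $\gamma\subset Inv$. Now Theorem \ref{invarianta} \emph{(i)} gives $x_{un}(t,x)=x_p(t,x)$ for all $t$, so $\gamma$ is also a periodic orbit, with the same period, for the unperturbed system \eqref{unperturbed}.

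Next, for a homoclinic orbit $\gamma$ asymptotic to an equilibrium $x_e\in E_p$ as $t\to\pm\infty$, and $x\in\gamma$, continuity of $G$ gives $\lim_{t\to -\infty}G(x_p(t,x))=\lim_{t\to +\infty}G(x_p(t,x))=G(x_e)$; monotonicity of $t\mapsto G(x_p(t,x))$ then forces this function to be identically $G(x_e)$, and as before Theorem \ref{invarianta} \emph{(iii)} yields $\gamma\subset Inv$ while Theorem \ref{invarianta} \emph{(i)} shows that $\gamma$, together with $x_e$, is a homoclinic orbit for \eqref{unperturbed}. The heteroclinic cycle case is identical in spirit: if the cycle joins equilibria $x_1,\dots,x_m,x_{m+1}=x_1$ through heteroclinic orbits, monotonicity of $G$ along each orbit gives $G(x_1)\ge G(x_2)\ge\dots\ge G(x_m)\ge G(x_1)$, so all these values coincide and $G$ is constant along every orbit of the cycle; Theorem \ref{invarianta} \emph{(iii)} then places the whole cycle in $Inv$ and Theorem \ref{invarianta} \emph{(i)} shows it is a heteroclinic cycle for \eqref{unperturbed} as well.

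The only point needing a little care — and the place I expect to be the main (though minor) obstacle — is the passage to the limit in the homoclinic and heteroclinic cases: one must justify that the limits of $G$ along the orbit as $t\to\pm\infty$ exist and equal the values of $G$ at the relevant equilibria. This is immediate from the smoothness (hence continuity) of $G$ and the very definition of homoclinic/heteroclinic orbits as orbits converging to equilibria, but it is worth stating explicitly, since \eqref{G-crescator} by itself only delivers monotonicity on finite time intervals; combined with the boundedness coming from convergence to the equilibria, monotonicity upgrades to convergence of $G$ and then to constancy.
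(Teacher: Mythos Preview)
Your argument is correct and is precisely the one the paper has in mind: the corollary is stated there as an immediate consequence of Theorem~\ref{invarianta} \emph{(i)} and \emph{(iii)}, and you have simply spelled out that immediate consequence in each of the three cases. The only difference is that you provide the details the paper omits, including the care about limits at $t\to\pm\infty$, which is a welcome addition rather than a divergence in approach.
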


For the invariant set $Inv$ we have a further decomposition
$$Inv=G_*\cup Y,$$
where $G_*:=\{x\in M\,|\,\nabla G(x)=0\}$ and $Y$ is the complementary set of $G_*$ in $Inv$. The subsets $G_*$ and $Y$ are also invariant subsets for the perturbed dynamics. Indeed, $G_*$ is invariant under the unperturbed dynamic (see \cite{birtea-comanescu}, \cite{irtegov-titorenko}) and if $x\in G_*$ is an initial condition then $x_p(t,x)=x_{un}(t,x)\in G_*$ for all $t\in \mathbb{R}$, which implies that $G_*$ is invariant under the perturbed dynamics. Consequently, $Y$ is also an invariant set for the perturbed dynamics. By construction we have $$Y=\{x\in M\,|\,\nabla G(x)\neq 0\,\,\text{and}\,\,\nabla F_1(x),...,\nabla F_k(x),\nabla G(x)\,\,\text{are linear dependent}\}.$$

\section{Perturbed dynamics on the regular leaves}

 In this section we will study the geometrically dissipated dynamics \eqref{perturbed} restricted to a regular leaf $L_c:=\mathbf{F}^{-1}(c)$ generated by a regular value of the function $\mathbf{F}:=(F_1,...,F_k):M\rightarrow \mathbb{R}^k$. Every regular leaf $L_c$ is invariant under perturbed dynamics \eqref{perturbed} as both vector fields $X$ and $\mathbf{v}_0$ are tangent vector fields to the leaves.

Next we will give a characterization of the invariant set $Inv\cap L_c$ for the perturbed dynamics restricted to the regular leaf $L_c$.
It has been proved in \cite{birtea-comanescu-geometric}, Theorem 4.5. that $\mathbf{v_0}_{_{|L_c}}=\nabla_{\tau_c}G_{|L_c}$, where $\tau_c=\frac{1}{\det\Sigma_{(F_1,...,F_k)}^{(F_1,...,F_k)}\circ i_c}i_c^*g$ is a conformal metric with the induced metric $i_c^*g$ on $L_c$. Consequently, we have
$$Inv\cap L_c=\{x\in L_c\,|\,\nabla_{\tau_c}G_{|L_c}(x)=0\}.$$
The derivative of the function $G_{|L_c}$ along the solution of the geometrically dissipated dynamics \eqref{perturbed} restricted to the regular leaf $L_c$ is given by
\begin{eqnarray*}
\dot{G}_{|L_c}(x)&=&L_{(X_{_{|L_c}}-\mathbf{v_0}_{_{|L_c}})}G_{|L_c}(x)=\tau_c(x)(X_{_{|L_c}}(x)-
\mathbf{v_0}_{_{|L_c}}(x),\nabla_{\tau_c}G_{|L_c}(x))\nonumber \\
&=& \tau_c(x)(X_{_{|L_c}}(x),\nabla_{\tau_c}G_{|L_c}(x))-\tau_c(x)(\mathbf{v_0}_{_{|L_c}}(x),\nabla_{\tau_c}G_{|L_c}(x))\nonumber \\
&=& L_{X_{_{|L_c}}}G_{|L_c}(x)-\tau_c(x)(\nabla_{\tau_c}G_{|L_c}(x),\nabla_{\tau_c}G_{|L_c}(x)) \nonumber \\
&=& -||\nabla_{\tau_c}G_{|L_c}(x)||_{\tau_c}^2,
\end{eqnarray*}
for any $x\in L_c$. In the above computations we have used the fact that the function $G$ is a conserved quantity for the vector field $X$ which implies $L_{X_{_{|L_c}}}G_{|L_c}=0$. Consequently, we obtain the set equality
\begin{equation}\label{G-punct}
    Inv\cap L_c=\{x\in L_c\,|\,\dot{G}_{|L_c}(x)=0\}.
\end{equation}

\begin{rem}
The set $\{x\in L_c\,|\,\dot{G}_{|L_c}(x)=0\}$ is the key set that appears in LaSalle Invariance Principle. The largest invariant set contained in $\{x\in L_c\,|\,\dot{G}_{|L_c}(x)=0\}$ is the set that contains informations about asymptotic behaviour of certain solutions. For our case the the largest invariant set contained in $\{x\in L_c\,|\,\dot{G}_{|L_c}(x)=0\}$ is the set itself as being equal with $Inv\cap L_c$.
\end{rem}
\medskip

In what follows we study the asymptotic behaviour of the solutions for the geometrically dissipated system \eqref{perturbed} restricted to a regular leaf $L_c$. 
We suppose that any solutions of the geometrically dissipated system \eqref{perturbed} are defined on $\mathbb{R}$.
The $\omega$-limit set of $x_0$ is
$$\omega(x_0):=\{z\in L_c\,|\,\exists t_1, t_2...\rightarrow \infty\,\,\texttt{s.t.}\,\,x_p(t_k, x_0)\rightarrow z \,\,\texttt{as}\,\,k\rightarrow\infty\}$$
The $\omega$-limit sets have the following properties that we will use later. For more details, see \cite{robinson}.

\begin{itemize}
  \item [(i)] If $x_p(t,y)=z$ for some $t\in \R$, then
  $\omega(y)=\omega(z)$.
  \item [(ii)] $\omega(x_0)$ is a closed subset and both positively and negatively
  invariant (contains complete orbits).
\end{itemize}

We have the following LaSalle type result.

\begin{thm}\label{omega-limit} (Invariance Principle for the geometrically dissipated system)

Let $x_0$ be an arbitrary point in $L_c$, then the following holds:
\begin{itemize}
  \item [(i)] If $a,b\in \omega (x_0)$ then $G(a)=G(b)\leq G(x_0)$. Equality holds if and only if $x_0\in Inv\cap L_c$.
  \item [(ii)] We have the following set inclusion $\omega(x_0)\subset Inv\cap L_c$.

  \item [(iii)] If $\{x_p(t,x_0)\,|\,t\geq 0\}$ is bounded then $\omega (x_0)$ is compact and nonempty and moreover $$\lim_{t\rightarrow \infty}d_{\tau_c}(x_p(t,x_0),Inv\cap L_c)=0,$$
where $d_{\tau_c}$ is the distance function on $L_c$ induced by the Riemannian metric $\tau_c$.
\end{itemize}
\end{thm}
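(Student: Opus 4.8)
\emph{Proof proposal.} The plan is to run the classical LaSalle argument, taking as the main new inputs the set identity \eqref{G-punct} (the LaSalle set $\{x\in L_c\,|\,\dot G_{|L_c}(x)=0\}$ equals $Inv\cap L_c$), the monotonicity $\dot G_{|L_c}=-\|\nabla_{\tau_c}G_{|L_c}\|_{\tau_c}^2\le 0$ established just before \eqref{G-punct}, Theorem \ref{invarianta}, and the two recalled properties of $\omega$-limit sets.

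For (i), note that $t\mapsto G(x_p(t,x_0))$ is non-increasing on $[0,\infty)$. If $a\in\omega(x_0)$, choose $t_k\to\infty$ with $x_p(t_k,x_0)\to a$; by continuity of $G$ together with the monotonicity of $t\mapsto G(x_p(t,x_0))$, the limit $\ell:=\lim_{t\to\infty}G(x_p(t,x_0))$ exists and equals $G(a)$. Since $\ell$ is independent of the choice of $a$, we get $G(a)=G(b)=\ell$ for all $a,b\in\omega(x_0)$, and $\ell\le G(x_p(0,x_0))=G(x_0)$ again by monotonicity. For the equality case: if $G(a)=G(x_0)$, then $t\mapsto G(x_p(t,x_0))$ is constant on $[0,\infty)$, so its derivative vanishes at $t=0$, i.e. $\dot G_{|L_c}(x_0)=0$, hence $x_0\in Inv\cap L_c$ by \eqref{G-punct}. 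Conversely, if $x_0\in Inv\cap L_c$, then Theorem \ref{invarianta} (i) gives $x_p(t,x_0)=x_{un}(t,x_0)$, the invariance of $Inv$ (under either dynamics) forces $x_p(t,x_0)\in Inv\cap L_c$ for all $t$, so by \eqref{G-punct} the derivative $\dot G_{|L_c}$ vanishes along the whole orbit, whence $G(x_p(t,x_0))\equiv G(x_0)$ and $G(a)=G(x_0)$.

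For (ii), take $a\in\omega(x_0)$; then $a\in L_c$ by definition of $\omega(x_0)$. Since $\omega(x_0)$ is invariant and contains complete orbits, $x_p(t,a)\in\omega(x_0)$ for all $t\in\R$, so by (i) the function $t\mapsto G(x_p(t,a))$ is constant; differentiating at $t=0$ gives $\dot G_{|L_c}(a)=0$, i.e. $a\in Inv\cap L_c$ by \eqref{G-punct}. For (iii), boundedness of the forward orbit makes its closure in $L_c$ compact, so $\omega(x_0)$ is nonempty and, being a closed subset of this compact set, compact. The convergence $d_{\tau_c}(x_p(t,x_0),\omega(x_0))\to 0$ follows from the usual contradiction argument: otherwise there would be $\varepsilon>0$ and $t_k\to\infty$ with $d_{\tau_c}(x_p(t_k,x_0),\omega(x_0))\ge\varepsilon$, while by compactness a subsequence of $x_p(t_k,x_0)$ converges to a point of $\omega(x_0)$, a contradiction. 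Combining with (ii), $d_{\tau_c}(x_p(t,x_0),Inv\cap L_c)\le d_{\tau_c}(x_p(t,x_0),\omega(x_0))\to 0$.

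I expect the only delicate point to be the properness issue hidden in the word ``bounded'' in (iii): one needs that bounded subsets of $(L_c,\tau_c)$ (equivalently of $(L_c,i_c^*g)$) have compact closure in $L_c$, which is where the standing assumption that solutions are defined on all of $\R$ is used; everything else is routine LaSalle machinery organized around \eqref{G-punct} and Theorem \ref{invarianta}.
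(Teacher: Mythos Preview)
Your proof is correct and follows essentially the same LaSalle-type argument as the paper: monotonicity of $G$ along orbits for (i), invariance of $\omega(x_0)$ combined with (i) and the identity \eqref{G-punct} for (ii), and the classical compactness/attraction facts plus the inclusion from (ii) for (iii). The only cosmetic difference is in (i): where the paper interleaves two sequences $t_n<s_n<t_{n+1}$ and sandwiches, you argue more directly that a monotone function with a convergent subsequence has the same full limit; both are standard and equivalent here.
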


\begin{proof}
For $(i)$, let $a,b\in \omega(x_0)$. There exists two sequences $(t_n)_{n\in \mathbb{N}}$ and $(s_n)_{n\in \mathbb{N}}$ such that $t_n<s_n<t_{n+1}$ with $t_n\rightarrow \infty$ and $x_p(t_n,x_0)\rightarrow a$, $x_p(s_n,x_0)\rightarrow b$. Consequently, as $G$ is a decreasing function along the solution $x_p(\cdot ,x_0)$ we have the inequality $G(x_p(t_{n+1} ,x_0))\leq G(x_p(s_n ,x_0))\leq G(x_p(t_{n} ,x_0))$. Taking the limit we obtain $G(a)=G(b)\leq G(x_0)$. If $x_0\in Inv$ then $x_p(t,x_0)\in Inv$ and $G(x_p(t,x_0))=G(x_0)$ for all $t\in\mathbb{R}$. Reciprocally, if $G(a)=G(b)= G(x_0)$ then using Theorem \ref{invarianta} $(iii)$ we obtain the enounced result.

$(ii)$ Let $a\in \omega(x_0)$, then $x_p(t,a)\in \omega (x_0)$ for all $t$. From $(i)$ we have that $G(x_p(t,a))=G(a)$ for all $t$ and  consequently, $\frac{dG}{dt}(x_p(t,a))=0$. We obtain that $x_p(t,a)\in Inv$ for all $t$ and in particular $a\in Inv$.

$(iii)$ The first part is a classical result, see \cite{bhatia-szego}, \cite{robinson}. We have $\omega(x_0)\subset Inv\cap L_c$ and consequently, $d_{\tau_c}(x_p(t,x_0),Inv\cap L_c)\leq d_{\tau_c}(x_p(t,x_0),\omega(x_0))$. But $\lim_{t\rightarrow \infty}d_{\tau_c}(x_p(t,x_0),\omega(x_0))=0$. 
\end{proof}

Our next purpose is to study the change of stability for equilibrium points of the unperturbed dynamics restricted to a regular leaf $L_c$ when we add the geometric dissipation of gradient type $-{\bf v}_0$. Because the added dissipation $-{\bf v}_0$ is of gradient type when restricted to a regular leaf $L_c$, it is to be expected that the stability of an equilibrium point for the perturbed system to be dictated by the nature of the equilibrium point as a critical point for the function $G_{|L_c}$.

\begin{thm} Let $x_e\in L_c$ be a locally strict minimum for $G_{|L_c}$. Then the following holds
\begin{itemize}
\item [(i)] $x_e$ is an asymptotically stable equilibrium for the geometrically dissipated system \eqref{perturbed} restricted to $L_c$.
\item [(ii)] There exists $k>G(x_e)$ such that $cc_{x_e}G_{|L_c}^{-1}([G(x_e), k])\cap Inv=\{x_e\}$. (The set $cc_{x_e}G_{|L_c}^{-1}([G(x_e), k])$ is the connected component of $G_{|L_c}^{-1}([G(x_e), k])$  that contains the point $x_e$.)
\item [(iii)] If $G_{|L_c}:L_c\rightarrow \mathbb{R}$ is a proper function, then
for any $k>G(x_e)$ for which $cc_{x_e}G_{|L_c}^{-1}([G(x_e),k])\cap Inv=\{x_e\}$ the set  $cc_{x_e}G_{|L_c}^{-1}([G(x_e),k])$ is included in the domain of attraction of the asymptotically stable equilibrium point $x_e$.
\end{itemize}
\end{thm}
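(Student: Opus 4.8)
The strategy is to use $V:=G_{|L_c}-G(x_e)$ as a Lyapunov function for the restricted perturbed flow and to pin down the $\omega$-limit sets via the LaSalle-type Theorem \ref{omega-limit}. First I would record the preliminary facts. Since $x_e$ is a local minimum of $G_{|L_c}$ it is a critical point, so $dG_{|L_c}(x_e)=0$, i.e. $\nabla G(x_e)$ lies in the span of $\nabla F_1(x_e),\dots,\nabla F_k(x_e)$ (the normal space of the regular leaf); by Lemma \ref{eq-v0} this gives $\mathbf{v}_0(x_e)=0$, hence $x_e\in Inv\cap L_c$. Moreover $x_e$ is an equilibrium of $X_{|L_c}$: its $X_{|L_c}$-orbit lies in the invariant level set $G_{|L_c}^{-1}(G(x_e))$, which near $x_e$ reduces to $\{x_e\}$, so by continuity the orbit is constant for small times and $X_{|L_c}(x_e)=0$ (this is also built into the surrounding framework, where $x_e$ is an equilibrium of the unperturbed dynamics). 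Thus $x_e$ is an equilibrium of the restricted system \eqref{perturbed}. I will also use the identity from the excerpt, $\dot G_{|L_c}(x)=-\|\nabla_{\tau_c}G_{|L_c}(x)\|_{\tau_c}^2\le 0$, together with the fact, recorded in \eqref{G-punct}, that $Inv\cap L_c$ is exactly the critical set of $G_{|L_c}$.

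For (ii): because $x_e$ is a \emph{strict} local minimum, it is an isolated point of the critical set of $G_{|L_c}$, i.e. an isolated point of $Inv\cap L_c$. Choose an open neighbourhood $W\ni x_e$ with $\overline W$ compact, $G_{|L_c}>G(x_e)$ on $\overline W\setminus\{x_e\}$, and $W\cap Inv=\{x_e\}$; put $m:=\min_{\partial W}G_{|L_c}>G(x_e)$ and take any $k\in(G(x_e),m)$. Since $G_{|L_c}\ge m>k$ on $\partial W$, the connected component of $G_{|L_c}^{-1}([G(x_e),k])$ containing $x_e$ cannot leave $W$, so $cc_{x_e}G_{|L_c}^{-1}([G(x_e),k])\subset W$ and therefore $cc_{x_e}G_{|L_c}^{-1}([G(x_e),k])\cap Inv=\{x_e\}$. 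For (i), with the same $W$, the sets $C_\varepsilon:=cc_{x_e}\{x\in\overline W\,|\,G_{|L_c}(x)\le G(x_e)+\varepsilon\}$ for $0<\varepsilon<m-G(x_e)$ are compact neighbourhoods of $x_e$ contained in $W$, positively invariant (along a solution $G_{|L_c}$ is non-increasing, so the solution cannot reach $\partial W$), and $\bigcap_\varepsilon C_\varepsilon=\{x_e\}$; this already yields Lyapunov stability. For $x_0\in C_\varepsilon$ the forward orbit is bounded, so Theorem \ref{omega-limit}(iii) gives $\omega(x_0)\neq\varnothing$ and $\omega(x_0)\subset Inv\cap L_c$; since also $\omega(x_0)\subset C_\varepsilon\subset W$, we get $\omega(x_0)\subset W\cap Inv=\{x_e\}$, hence $x_p(t,x_0)\to x_e$. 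Together with stability, $x_e$ is asymptotically stable.

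For (iii): let $k>G(x_e)$ be such that $S:=cc_{x_e}G_{|L_c}^{-1}([G(x_e),k])$ satisfies $S\cap Inv=\{x_e\}$. As $G_{|L_c}$ is proper, $G_{|L_c}^{-1}([G(x_e),k])$ is compact, hence so is its connected component $S$. I claim $S$ is positively invariant for the restricted perturbed flow: along a solution $G_{|L_c}$ is non-increasing by \eqref{G-crescator}, so it stays $\le k$; and it cannot decrease through the value $G(x_e)$ while the solution is in $S$, because $\min_S G_{|L_c}=G(x_e)$ is attained only at $x_e$ — a minimizing point of $G_{|L_c}$ on $S$ is a critical point of $G_{|L_c}$, hence lies in $S\cap Inv=\{x_e\}$ — and $x_e$ is an equilibrium; thus $G_{|L_c}>G(x_e)$ on $S\setminus\{x_e\}$ and solutions starting in $S$ remain in $S$. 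Then for $x_0\in S$ the forward orbit is bounded, so Theorem \ref{omega-limit}(iii) gives $\omega(x_0)\neq\varnothing$ with $\omega(x_0)\subset Inv\cap L_c\cap S=\{x_e\}$, whence $x_p(t,x_0)\to x_e$ and $S$ lies in the domain of attraction of $x_e$. The main obstacle in this scheme is a regularity issue that surfaces twice: that a strict local minimum of $G_{|L_c}$ is an \emph{isolated} critical point (needed in (ii) for the neighbourhood $W$ to exist), and that the minimum of $G_{|L_c}$ over $S$ is attained only at $x_e$, including on the topological boundary of $S$ (needed for the positive invariance in (iii)); both are automatic under mild additional hypotheses — e.g. $x_e$ nondegenerate, $G$ real-analytic, or $G(x_e)$ and $k$ regular values — which seem to be implicit in the statement.
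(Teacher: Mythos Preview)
Your Lyapunov/LaSalle plan matches the paper's, and where you carry it out (the small-ball construction for (ii), the $\omega$-limit argument after invariance in (iii)) it coincides with the paper's proof. The difference is exactly at the two places you flag as ``regularity obstacles'' and propose to cover by extra hypotheses (nondegeneracy, analyticity, regular values). Those hypotheses are \emph{not} in the statement, and the paper does not assume them; it supplies specific tools instead.

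For the isolation of $x_e$ in $Inv\cap L_c$ (needed in (i)--(ii)), the paper invokes Sard's theorem via a lemma of Arsie--Ebenbauer: the connected component of the critical set $Inv\cap L_c=\{\nabla_{\tau_c}G_{|L_c}=0\}$ through $x_e$ lies in the single level $G_{|L_c}^{-1}(G(x_e))$, and a strict local minimum is isolated in its own level set; from this the paper concludes isolation in $Inv\cap L_c$ and then runs the same Lyapunov argument you give. For the positive invariance of $S=cc_{x_e}G_{|L_c}^{-1}([G(x_e),k])$ in (iii), your step ``a minimizer of $G_{|L_c}$ on $S$ is a critical point of $G_{|L_c}$'' is not valid at boundary points of $S$ lying on the level $G(x_e)$, since there can be nearby points of $L_c$ outside $S$ with strictly smaller $G$. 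The paper avoids this by a Mountain Pass argument: if a trajectory left $S$ through the level $G(x_e)$ at some $y\neq x_e$, then separating $x_e$ from $y$ by a compact set inside $S$ yields a mountain-pass (or local-extremum) critical point $x^*\in S$ with $G(x^*)\neq G(x_e)$, hence $x^*\in (Inv\cap S)\setminus\{x_e\}$, contradicting $S\cap Inv=\{x_e\}$. So your scheme is right, but the two obstacles are not discharged by implicit extra assumptions; in the paper they are handled by the Sard/Arsie--Ebenbauer lemma and by the Mountain Pass Theorem, respectively.
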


\begin{proof}
$(i)$ We prove that if $x_e$ is a locally strict minimum for $G_{|L_c}$ then $x_e$ is isolated in $Inv\cap L_c$. Indeed, we have $x_e\in cc_{x_e}\{x\in L_c\,|\,\nabla_{\tau_c}G_{|L_c}(x)=0\}=cc_{x_e}(Inv\cap L_c)$. Using Sard Theorem, see \cite{sternberg} and \cite{arsie-ebenbauer},  Lemma 10, we have the inclusion $cc_{x_e}\{x\in L_c\,|\,\nabla_{\tau_c}G_{|L_c}(x)=0\}\subset G_{|L_c}^{-1}(G(x_e))$. But $x_e$ is a locally strict minimum for $G_{|L_c}$ and consequently, it is isolated in $G_{|L_c}^{-1}(G(x_e))$ which also implies that it is isolated in $Inv\cap L_c$.

From the invariance of $Inv\cap L_c$ for the perturbed dynamics \eqref{perturbed} restricted to $L_c$ and the fact that $x_e$ is isolated in $Inv\cap L_c$ we obtain that $x_e$ is an equilibrium point. From locally strict minimality of $x_e$ and the fact that $x_e$ is isolated in $Inv\cap L_c$ there exists a small neighborhood $U$ in $L_c$ of $x_e$ such that $G_{|L_c}(x)-G_{|L_c}(x_e)>0$ and  $\dot{G}_{|L_c}(x)<0$ for any $x\in U\backslash \{x_e\}$. By Lyapunov theorem we obtain that $x_e$ is an asymptotically stable equilibrium for the perturbed dynamics \eqref{perturbed} restricted to $L_c$.
\medskip

$(ii)$ We have shown that $x_e$ is isolated in the set $Inv\cap L_c$ and consequently, there exists a closed ball $\overline{B}(x_e,r)$ with $G_{|L_c}(x)>G_{|L_c}(x_e),\,\forall x\in (\overline{B}(x_e,r)\cap L_c)\backslash\{x_e\}$ and $\overline{B}(x_e,r)\cap Inv\cap L_c=\{x_e\}$. There exists $k_0=\min\limits_{x\in S(x_e,r)\cap L_c}G_{|L_c}(x)>G_{|L_c}(x_e)$, where $S(x_e,r)$ is the sphere with the radius $r$ and centered at $x_e$. If $k\in (G(x_e),k_0)$ then, $cc_{x_e}G_{|L_c}^{-1}([G(x_e),k])\subset \overline{B}(x_e,r)\cap L_c$. We will prove this by contradiction, indeed suppose there exists $y\in cc_{x_e}G_{|L_c}^{-1}([G(x_e),k])$ and $y\notin \overline{B}(x_e,r)\cap L_c$. There exists a continuous arc $a_{y,x_e}$ included in $cc_{x_e}G_{|L_c}^{-1}([G(x_e),k])$ connecting $y$ and $x_e$. Therefore, there exists $z\in a_{y,x_e}\cap S(x_e,r)\cap L_c$. Consequently, $k_0>k\geq G(z)$ which is a contradiction with the fact that $k_0$ is the minimum value of $G_{|L_c}$ for points in the sphere $S(x_e,r)\cap L_c$.
\medskip

$(iii)$ We prove that the set $D_{x_e}:=cc_{x_e}G_{|L_c}^{-1}([k,G(x_e)])$ is an invariant set for the perturbed dynamics \eqref{perturbed} under the hypothesis of $(iii)$.

We will proceed by contradiction, we suppose that $D_{x_e}$ is not invariant under \eqref{perturbed}. There exists $x_0\in D_{x_e}\backslash\{x_e\}$ and $t^*>0$ such that $G_{|L_c}(x_p(t^*,x_0))=G(x_e)$ and this is a consequence of the fact that $G_{|L_c}$ is an increasing function along the solutions of \eqref{perturbed}. As $x_e$ is a critical point for \eqref{perturbed} we have that $x_p(t^*,x_0)\neq x_e$. We have the following partition
$$D_{x_e}\cap G_{|L_c}^{-1}(G(x_e))=\{x_e\}\cup Y,$$
where $x_e\notin Y$, and $x_p(t^*,x_0)\in Y$ and $Y$ is a compact set in the relative topology of $D_{x_e}$. The compactness of $Y$ is a consequence of the the compactness of $D_{x_e}\cap G_{|L_c}^{-1}(G(x_e))$ and the fact that $x_e$ is isolated in $G_{|L_c}^{-1}(G(x_e))$ as being a locally strict maximum.

Because $D_{x_e}$ has $T_3$ separability property there exists two open neighborhoods $V_{x_e}$ and $V_Y$ (in the relative topology of $D_{x_e}$) of $x_e$ and respectively $Y$ such that $V_{x_e}\cap Y=\emptyset$.

Let $S:=D_{x_e}\backslash (V_{x_e}\cup Y)$. The set $S$ is a closed set in the compact set $D_{x_e}$ and consequently it is compact and by construction separates $x_e$ and $x_p(t^*,x_0)\in Y$. By the Mountain Pass Theorem (see \cite{jabri} ) there exists a point $x^*\in D_{x_e}$ which is a local maximum or a mountain pass point for $G_{|D_{x_e}}$ with $G(x^*)<G(x_e)$. According to Lemma \ref{mountain-anexe} (see Annexe), we have that $x^*$ is a local maximum or a mountain pass point for $G_{|L_c}$ restricted to the set $cc_{x_e}G^{-1}_{|L_c}((k-\varepsilon, G(x_e)])$, where $\varepsilon>0$ is small. Because $G(x^*)<G(x_e)$ we have that $x^*$ is a local maximum or a mountain pass point for $G_{|L_c}$ restricted to the open set $\overset{\circ}{\overbrace{cc_{x_e}G^{-1}_{|L_c}((k-\varepsilon, G(x_e)])}}$. Consequently, $x^*$ is a critical point for $G_{|L_c}$ which implies that $x^*\in Inv$. We have obtained a contradiction which shows that $D_{x_e}$ is a compact invariant set for the perturbed dynamics \eqref{perturbed}.

Let $x_0\in D_{x_e}$ be arbitrary. Because $D_{x_e}$ is invariant and compact we obtain that $\omega(x_0)\neq \emptyset$ and $\omega(x_0)\in D_{x_e}$. But also $\omega(x_0)\in Inv\cap L_c$ by Theorem \ref{omega-limit} $(ii)$. By hypothesis $\omega(x_0)=\{x_e\}$ and as $x_e$ is also asymptotically stable we obtain $\underset{t\rightarrow \infty}{\lim} x_p(t,x_0)=x_e$.
\end{proof}

We notice that the condition $x_e\in L_c$ being a locally strict minimum for $G_{|L_c}$ is equivalent with the following two conditions: $x_e\in L_c$ is a local minimum for $G_{|L_c}$ and $x_e$ is isolated in $Inv\cap L_c$.
We can summarize as follows:
{\bf \begin{itemize}
\item [(1)] Suppose $x_e$ is stable for the unperturbed dynamics \eqref{unperturbed} restricted to the leaf $L_c$. 
\begin{itemize}
\item [(1.1)] If $x_e$ is strict local minimum for $G_{|L_c}$ (which implies that it is isolated in $Inv\cap L_c$), then $x_e$ is an asymptotically stable equilibrium for the geometrically dissipated system \eqref{perturbed} restricted to the leaf $L_c$.

\item [(1.2)]  If $x_e$ is not a strict local minimum for $G_{|L_c}$ but it is still an isolated point in the set $Inv\cap L_c$, then $x_e$ is an unstable equilibrium for the geometrically dissipated system \eqref{perturbed} restricted to the leaf $L_c$.
\end{itemize}
 
\item [(2)] Suppose $x_e$ is unstable for the unperturbed dynamics \eqref{unperturbed} restricted to the leaf $L_c$, then $x_e$ remains an  unstable equilibrium for the geometrically dissipated system \eqref{perturbed} restricted to the leaf $L_c$.

\end{itemize}}

Also, if $x_e\in L_c$ is a locally strict extremum for $G_{|L_c}$ we obtain that $x_e$ is a stable equilibrium point for the unperturbed dynamics \eqref{unperturbed}. This is a consequence of the algebraic method for stability, see \cite{comanescu-1}, \cite{comanescu-2}, and \cite{comanescu-3}, i.e. $x_e$ is an isolated solution of the algebraic system 
$$F_1(x)=F_1(x_e),\,...\,,\,F_k(x)=F_k(x_e),\, G(x)=G(x_e).$$

The passage from asymptotic stability of equilibrium points of the  geometrically dissipated system \eqref{perturbed} to the asymptotic stability of periodic orbits for the  geometrically dissipated system \eqref{perturbed} is allowed by Theorem \ref{omega-limit}. More precisely, for periodic orbits we have the following stability result. 

\begin{thm}
Let $x_p(\R,x_0)$ be a periodic orbit for the  geometrically dissipated system \eqref{perturbed}. Assume that $x_p(\R,x_0)=cc_{x_0}(Inv \cap L_c)$ and all $y\in x_p(\R,x_0)$ are local minima for $G_{|L_c}$. Then the following holds:
\begin{itemize}
\item [(i)] The periodic orbit $x_p(\R,x_0)$ is asymptotically stable for the geometrically dissipated system \eqref{perturbed} restricted to $L_c$. (Asymptotic stability is understood in the sense of asymptotic stability of an invariant set of a dynamical system).
\item [(ii)] There exists $k>G(x_0)$ such that $cc_{x_0}G_{|L_c}^{-1}([G(x_0), k])\cap Inv=x_p(\R,x_0)$. 
\item [(iii)] If $G_{|L_c}:L_c\rightarrow \mathbb{R}$ is a proper function, then
for any $k>G(x_0)$ for which $cc_{x_0}G_{|L_c}^{-1}([G(x_0),k])\cap Inv=x_p(\R,x_0)$ the set  $cc_{x_0}G_{|L_c}^{-1}([G(x_0),k])$ is included in the domain of attraction of the asymptotically stable periodic orbit $x_p(\R,x_0)$.
\end{itemize}

\end{thm}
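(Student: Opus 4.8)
The statement is the periodic-orbit analogue of the previous theorem, and the plan is to run the same three-step scheme with the equilibrium point $x_e$ replaced throughout by the compact invariant set $\Gamma:=x_p(\R,x_0)$, the condition that $x_e$ be a strict local minimum of $G_{|L_c}$ replaced by the two standing hypotheses here (every point of $\Gamma$ is a local minimum of $G_{|L_c}$, and $\Gamma=cc_{x_0}(Inv\cap L_c)$), and Lyapunov's theorem for equilibria replaced by its version for compact invariant sets, see \cite{bhatia-szego}. Two preliminary observations are worth recording first: $\Gamma$ is the continuous image of a circle, hence compact and, as a nontrivial periodic orbit, an embedded submanifold of $L_c$; and by the Corollary following Theorem \ref{invarianta} the orbit $\Gamma$ is contained in $Inv$, so $\dot G$ vanishes identically along it and $G\equiv G(x_0)$ on $\Gamma$, in agreement with the local-minimum hypothesis and with the Sard-type inclusion $\Gamma=cc_{x_0}(Inv\cap L_c)\subset G_{|L_c}^{-1}(G(x_0))$ already used in the previous theorem.

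For (i) the plan is first to establish that $\Gamma$ is isolated in $Inv\cap L_c$, i.e. that there is an open neighborhood $W$ of $\Gamma$ in $L_c$ with $W\cap Inv\cap L_c=\Gamma$; this is the step I expect to be the main obstacle, since the absence of strictness of the minimum along $\Gamma$ must be compensated precisely by the hypothesis that $\Gamma$ is a \emph{full} connected component of $Inv\cap L_c$, exactly as strictness was exploited before. The intended argument mirrors that one: compactness of $\Gamma$ together with the local-minimum property gives a uniform neighborhood $\mathcal N$ of $\Gamma$ with $G_{|L_c}\ge G(x_0)$ on $\mathcal N$; a point of $Inv\cap L_c$ lying in the interior of $\mathcal N$ at the level $G(x_0)$ is then automatically a local minimum of $G_{|L_c}$, hence belongs to the component $cc_{x_0}(Inv\cap L_c)=\Gamma$, while the Sard inclusion excludes nearby critical points at strictly higher levels. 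Granting this, on a small enough neighborhood $U$ of $\Gamma$ one has $G_{|L_c}(x)-G(x_0)>0$ for every $x\in U\setminus\Gamma$, and $U\setminus\Gamma$ is disjoint from $Inv$, so that $\dot G_{|L_c}(x)=-||\nabla_{\tau_c}G_{|L_c}(x)||_{\tau_c}^2<0$ there. Thus $V:=G_{|L_c}-G(x_0)$ is a strict Lyapunov function for the compact invariant set $\Gamma$, and asymptotic stability of $\Gamma$ follows from the Lyapunov theorem for invariant sets.

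For (ii), using that $\Gamma$ is compact, embedded and isolated in $Inv\cap L_c$, one chooses a compact tubular neighborhood $N$ of $\Gamma$ in $L_c$ with $N\cap Inv\cap L_c=\Gamma$ and $G_{|L_c}>G(x_0)$ on $N\setminus\Gamma$. Its topological boundary $\partial N$ is compact and disjoint from $\Gamma$, so $k_0:=\min_{\partial N}G_{|L_c}>G(x_0)$. For any $k\in(G(x_0),k_0)$ the set $cc_{x_0}G_{|L_c}^{-1}([G(x_0),k])$ is contained in $N$: otherwise a continuous arc inside it joining $x_0$ to a point outside $N$ would meet $\partial N$, where $G_{|L_c}\ge k_0>k$, a contradiction. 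Hence $cc_{x_0}G_{|L_c}^{-1}([G(x_0),k])\cap Inv\subset N\cap Inv=\Gamma$, and the reverse inclusion is clear because $\Gamma$ is connected, contains $x_0$ and lies in $G_{|L_c}^{-1}(G(x_0))$; this yields the asserted equality. The step is a verbatim adaptation of part (ii) of the previous theorem, with the sphere $S(x_e,r)$ replaced by $\partial N$.

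For (iii), properness of $G_{|L_c}$ makes $D:=cc_{x_0}G_{|L_c}^{-1}([G(x_0),k])$ compact. One first shows that $D$ is invariant under \eqref{perturbed} restricted to $L_c$, arguing by contradiction: if a solution starting in $D$ left $D$, then since $G$ is non-increasing along solutions and cannot cross the upper level $k$ from below, it would have to reach the level $G(x_0)$ at some point $z\in D\setminus\Gamma$ (the case $z\in\Gamma$ is excluded because then the solution would coincide with $\Gamma$ and never leave $D$). Writing $D\cap G_{|L_c}^{-1}(G(x_0))=\Gamma\sqcup Y$ as a disjoint union of compact sets — again using that points of this set near $\Gamma$ are local minima, hence in $Inv\cap L_c$, hence in $\Gamma$, so that $\Gamma$ is relatively open in it — one takes disjoint relatively open neighborhoods $V_\Gamma\supset\Gamma$ and $V_Y\supset Y$ and sets $S:=D\setminus(V_\Gamma\cup V_Y)$, a compact set on which $G_{|L_c}>G(x_0)$ and which separates $\Gamma$ from $z$ inside $D$. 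The Mountain Pass Theorem \cite{jabri}, combined with Lemma \ref{mountain-anexe} to pass to a slightly enlarged open set, then produces a critical point $x^*$ of $G_{|L_c}$ with $G(x^*)>G(x_0)$; hence $x^*\in D\cap Inv$ with $x^*\notin\Gamma$, contradicting the hypothesis $D\cap Inv=\Gamma$ together with $G\equiv G(x_0)$ on $\Gamma$. Once $D$ is known to be compact and invariant, for any $x_0'\in D$ one has $\emptyset\ne\omega(x_0')\subset D$, and $\omega(x_0')\subset Inv\cap L_c$ by Theorem \ref{omega-limit}(ii), so $\omega(x_0')\subset D\cap Inv\cap L_c=\Gamma$; combined with the asymptotic stability of $\Gamma$ from (i) this gives $d_{\tau_c}(x_p(t,x_0'),\Gamma)\to0$ as $t\to\infty$, i.e. $D$ lies in the domain of attraction of the periodic orbit $\Gamma$.
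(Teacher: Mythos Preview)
The paper states this theorem without proof; the sentence immediately preceding it (``The passage from asymptotic stability of equilibrium points of the geometrically dissipated system \eqref{perturbed} to the asymptotic stability of periodic orbits \ldots\ is allowed by Theorem~\ref{omega-limit}'') makes clear that the intended argument is the direct analogue of the preceding equilibrium theorem, with $x_e$ replaced by the compact invariant set $\Gamma=x_p(\R,x_0)$ and the point-version of Lyapunov's theorem replaced by its version for compact invariant sets. Your proposal carries out exactly this adaptation, step for step (isolation in $Inv\cap L_c$ and the Lyapunov argument for (i), the tubular-neighborhood version of the sphere argument for (ii), and the Mountain-Pass invariance argument combined with Theorem~\ref{omega-limit}(ii) for (iii)), so it matches the paper's implicit approach.

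One caveat worth recording: the passage in your part (i) from the hypothesis $\Gamma=cc_{x_0}(Inv\cap L_c)$ to the conclusion that $\Gamma$ is \emph{isolated} in $Inv\cap L_c$ (i.e.\ relatively open there, not merely a connected component) is not fully closed by the two reasons you sketch. A compact connected component of a closed set need not be open in it, and neither the observation that points of $\mathcal N$ at level $G(x_0)$ are local minima nor the Sard inclusion rules out critical points at levels strictly above $G(x_0)$ accumulating on $\Gamma$. The same looseness is already present in the paper's own proof of the equilibrium case (where ``isolated in $G_{|L_c}^{-1}(G(x_e))$'' is asserted to imply ``isolated in $Inv\cap L_c$'' without further comment), so your adaptation is faithful to the source; but you rightly flag this as the delicate step, and a fully rigorous version would either strengthen the standing hypothesis to ``$\Gamma$ is isolated in $Inv\cap L_c$'' or supply a separate argument bridging the two notions.
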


\section{Annexe}

The following results are taken from book \cite{jabri}.

\begin{defn}
Let $X$ be a topological space and $f : X \rightarrow \mathbb{R}$ be a continuous function. A point
$x\in X$ is called a mountain pass point (in the sense of Katriel) if for every neighborhood
$\mathcal{N}$ of $x$, the set
$$\mathcal{N}\cap \{y\in X\,|\, f (y) > f (x)\}$$
is disconnected.
\end{defn}

\begin{lem}\label{mountain-anexe}
Let $f:\widetilde{X}\rightarrow \mathbb{R}$ be a continuous function and let $X\subset \widetilde{X}$ be a subset with the property that
$$f(y)\leq \underset{z\in X}{\inf}f(z),\,\,\forall y\in \widetilde{X}\backslash X.$$
If $x\in X$ is a mountain pass point for $f_{|X}$ then $x$ is a mountain pass for $f:\widetilde{X}\rightarrow \mathbb{R}$.
\end{lem}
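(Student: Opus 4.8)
The plan is to reduce the statement to a single set-theoretic identity: the strict superlevel set of $f$ above the value $f(x)$ lies entirely inside $X$, and therefore coincides with the corresponding superlevel set of $f_{|X}$. Granting this, the mountain pass condition transfers verbatim from $f_{|X}$ to $f$, because the two relevant sets that appear in the definition of a mountain pass point are literally equal.

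First I would record the consequence of the hypothesis. Since $x\in X$ we have $f(x)\geq\inf_{z\in X}f(z)$, while the standing assumption gives $f(y)\leq\inf_{z\in X}f(z)$ for every $y\in\widetilde{X}\backslash X$. Chaining these inequalities yields $f(y)\leq f(x)$ for all $y\in\widetilde{X}\backslash X$, so no point of $\widetilde{X}\backslash X$ can satisfy $f>f(x)$. Hence
$$\{y\in\widetilde{X}\,|\,f(y)>f(x)\}=\{y\in X\,|\,f(y)>f(x)\}.$$
Denote this common set by $S$; by construction $S\subset X$.

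Next I would match neighborhoods. Let $\mathcal{N}$ be an arbitrary neighborhood of $x$ in $\widetilde{X}$. Then $\mathcal{N}\cap X$ is a neighborhood of $x$ in the subspace topology of $X$, and since $x\in X$ we have $f_{|X}(x)=f(x)$. Applying the hypothesis that $x$ is a mountain pass point for $f_{|X}$ to the neighborhood $\mathcal{N}\cap X$, the set $(\mathcal{N}\cap X)\cap\{y\in X\,|\,f(y)>f(x)\}$ is disconnected. Using the identity above, this set equals $(\mathcal{N}\cap X)\cap S$, which equals $\mathcal{N}\cap S=\mathcal{N}\cap\{y\in\widetilde{X}\,|\,f(y)>f(x)\}$ because $S\subset X$. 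Thus $\mathcal{N}\cap\{y\in\widetilde{X}\,|\,f(y)>f(x)\}$ is disconnected; as $\mathcal{N}$ was arbitrary, $x$ is a mountain pass point for $f:\widetilde{X}\rightarrow\mathbb{R}$.

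The only point requiring care — and the place I expect the argument to feel subtle rather than hard — is to confirm that the word "disconnected" is unambiguous when the same set is regarded inside $X$ and inside $\widetilde{X}$. This is legitimate because $X$ carries the subspace topology of $\widetilde{X}$, so any subset $S\subset X$ receives the same relative topology whether one views it as a subspace of $X$ or directly as a subspace of $\widetilde{X}$. Connectedness being an intrinsic property of a topological space, the disconnectedness of $\mathcal{N}\cap S$ computed in $X$ coincides with its disconnectedness computed in $\widetilde{X}$, which makes the transfer in the previous paragraph rigorous and completes the proof.
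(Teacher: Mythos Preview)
Your proof is correct and follows essentially the same route as the paper's: both reduce to the set identity $\widetilde{\mathcal{N}}\cap\{y\in\widetilde{X}\mid f(y)>f(x)\}=(\widetilde{\mathcal{N}}\cap X)\cap\{y\in X\mid f(y)>f(x)\}$, which the paper asserts directly ``using the hypothesis'' while you justify it in detail. Your extra paragraph confirming that disconnectedness is independent of whether the set is viewed in $X$ or in $\widetilde{X}$ is a welcome clarification that the paper leaves implicit.
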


\begin{proof}
Let $x$ be a mountain pass point for $f_{|X}$. Let $\widetilde{\mathcal{N}}$ be an arbitrary neighborhood of $x$ in $\widetilde{X}$. By definition of induced topology we have that $\mathcal{N}:=\widetilde{\mathcal{N}}\cap X$ is a neighborhood of $x$ in $X$. Using the hypothesis we obtain the set equality
$$\mathcal{N}\cap \{y\in X\,|\, f (y) > f (x)\}=\widetilde{\mathcal{N}}\cap \{y\in \widetilde{X}\,|\, f (y) > f (x)\}.$$
This shows that $x$ is also a mountain pass point for $f:\widetilde{X}\rightarrow \mathbb{R}$.

\end{proof}

\end{document}